\newtheorem{theorem}{Theorem}
\newtheorem{lemma}{Lemma}
\icmltitlerunning{Topological Mixture Estimation}
\begin{document}

\twocolumn[
\icmltitle{Topological Mixture Estimation}



\icmlsetsymbol{equal}{*}

\begin{icmlauthorlist}
\icmlauthor{Steve Huntsman}{bae}
\end{icmlauthorlist}

\icmlaffiliation{bae}{BAE Systems FAST Labs, Arlington, VA, USA}

\icmlcorrespondingauthor{Steve Huntsman}{steve.huntsman@baesystems.com}

\icmlkeywords{Machine Learning, ICML}

\vskip 0.3in
]



\printAffiliationsAndNotice{}  

\begin{abstract}
We introduce \emph{topological mixture estimation}, a completely nonparametric and computationally efficient solution to the problem of estimating a one-dimensional mixture with generic unimodal components. 
We repeatedly perturb the unimodal decomposition of Baryshnikov and Ghrist to produce a topologically and information-theoretically optimal unimodal mixture. 
We also detail a smoothing process that optimally exploits topological persistence of the unimodal category in a natural way when working directly with sample data. 
Finally, we illustrate these techniques through examples.
\end{abstract}

\section{\label{sec:Introduction}Introduction}

\subsection{\label{sec:background}Background}

Density functions that represent sample data are often multimodal, i.e. they exhibit more than one maximum. Typically this behavior indicates that the underlying data deserves a more detailed representation as a mixture of densities with individually simpler structure. The usual specification of a component density is quite restrictive, with log-concave the most general case considered in the literature, and Gaussian the overwhelmingly typical case. It is also necessary to determine the number of mixture components \emph{a priori}, and much art is devoted to this. 

In this paper we detail how to efficiently determine a topologically and information-theoretically optimal mixture of generic unimodal component densities directly from a one-dimensional input density and without any auxiliary information whatsoever. The topological criterion is a natural qualitative alternative to more traditional quantitative model selection criteria (e.g., information criteria) and is computed at the outset of computation, then subsequently preserved, while the information-theoretical criterion optimally separates component densities. We further show how to optimally smooth the mixture when the input density itself is being estimated. Topological persistence (which operationally amounts to the assignment of significance to topological features that persist as a function of scale) is the essential ingredient in both the ``model selection'' and smoothing.

\subsection{\label{sec:formalMotivation}Formal Motivation}

To give some formal motivation, let $\mathcal{D}(\mathbb{R}^d)$ denote a suitable space of continuous probability densities (henceforth merely called densities) on $\mathbb{R}^d$. A \emph{mixture} on $\mathbb{R}^d$ with $M$ components is a pair $(\pi,p) \in \Delta_M^\circ \times \mathcal{D}(\mathbb{R}^d)^M$, where $\Delta_M^\circ := \{\pi \in (0,1]^M : \sum_m \pi_m = 1\}$; we write $|(\pi,p)| := M$, and note that $\pi$ cannot have any components equal to zero. The corresponding \emph{mixture density} is $\langle \pi, p \rangle := \sum_{m=1}^M \pi_m p_m$. The \emph{Jensen-Shannon divergence} of $(\pi,p)$ is \cite{briet2009properties}
\begin{equation}
\label{eq:JS}
J(\pi, p) := H \left ( \langle \pi, p \rangle \right ) - \langle \pi, H(p) \rangle
\end{equation}
where $H(p)_m := H(p_m)$ and $H(f) := -\int f \log f \ dx$ is the entropy of $f$. 

Now $J(\pi,p)$ is the mutual information between the random variables $\Xi \sim \pi$ and $X \sim \langle \pi, p \rangle$. Since mutual information is always nonnegative, the same is true of $J$. The concavity of $H$ gives the same result, i.e. $H \left ( \langle \pi, p \rangle \right ) \ge \langle \pi, H(p) \rangle$. If $M := |(\pi,p)| > 1$, $\hat \pi := \left ( \pi_1,\dots,\pi_{M-2},\pi_{M-1}+\pi_M \right )$, and $\hat p := \left ( p_1,\dots,p_{M-2},\frac{\pi_{M-1} p_{M-1}+\pi_M p_M}{\pi_{M-1}+\pi_M} \right )$, then is easy to show that $J(\hat \pi, \hat p) \le J(\pi, p)$.

We say that a density $f \in \mathcal{D}(\mathbb{R}^d)$ is \emph{unimodal} if $f^{-1}([y,\infty))$ is either empty or contractible (i.e., topologically equivalent to a point in the sense of homotopy) for all $y$. For $d=1$, this simply means that any nonempty sets $f^{-1}([y,\infty))$ are intervals and agrees with intuition. We call a mixture $(\pi, p)$ unimodal iff each of the component densities $p_m$ is unimodal. The \emph{unimodal category} $\text{ucat}(f)$ is the smallest number of components of any unimodal mixture $(\pi,p)$ that satisfies $\langle \pi, p \rangle = f$. Figure \ref{fig:unidec} shows that the unimodal category can be much less than the number of maxima. In the event that $\langle \pi, p \rangle = f$ and $|(\pi,p)| = \text{ucat}(f)$, we write $(\pi, p) \models f$: the symbol $\models$ is called ``models.'' The unimodal category is a topological invariant that generalizes and relates to other classical invariants \cite{baryshnikov2011unimodal, ghrist2014elementary}.

\begin{figure}[htbp]
\includegraphics[trim = 10mm 80mm 10mm 11mm, clip, width=\columnwidth, keepaspectratio]{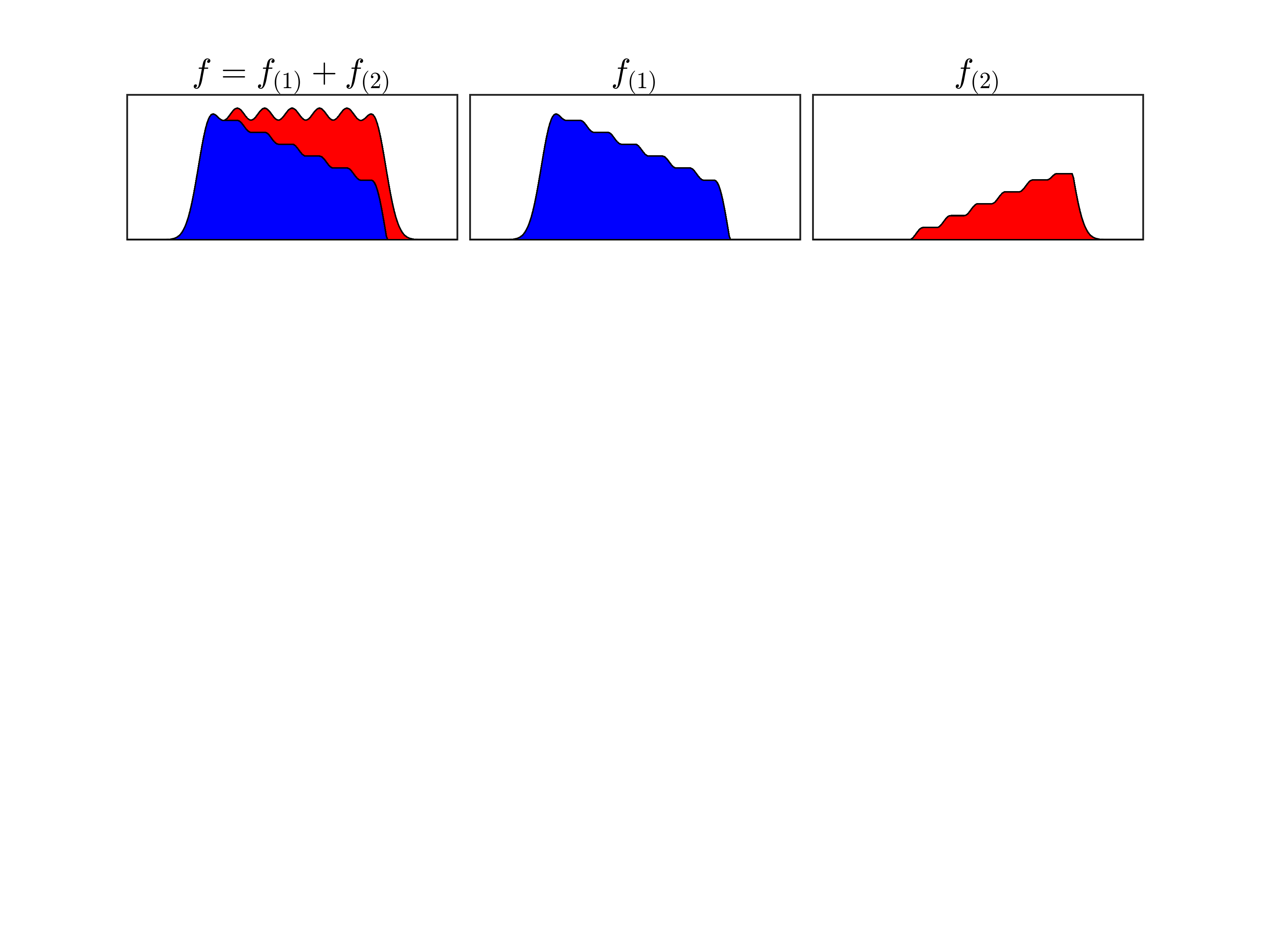}
\\
\includegraphics[trim = 10mm 80mm 10mm 11mm, clip, width=\columnwidth, keepaspectratio]{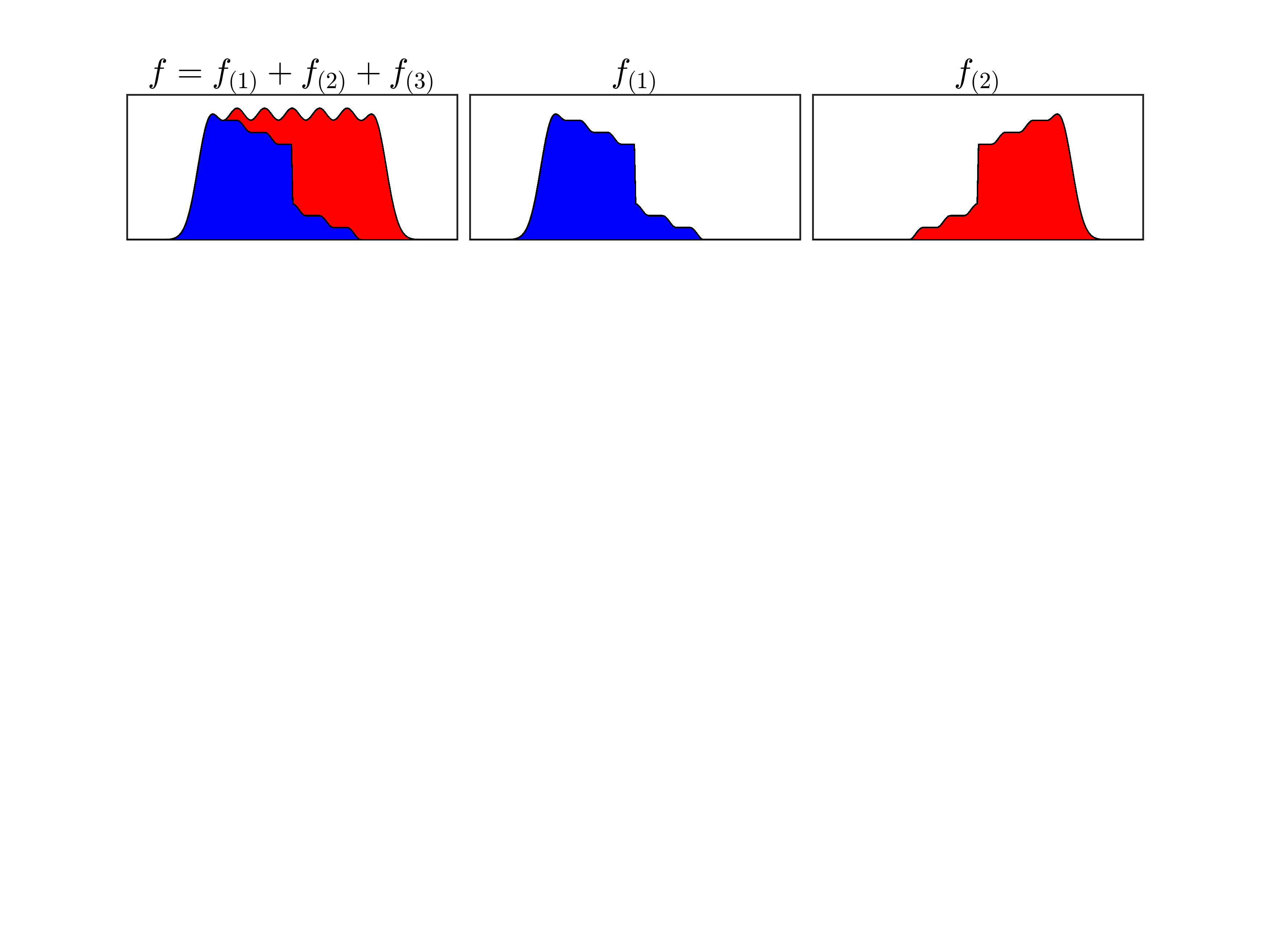}
\caption{ \label{fig:unidec} Upper panels: a unimodal decomposition obtained using the ``sweep'' algorithm from \cite{baryshnikov2011unimodal}. Lower panels: the result of \eqref{eq:TME}. The de/reblurring approach of \S \ref{sec:deblurring} gives smoother decompositions on estimated densities: see Figures \ref{fig:OldFaithfulArea}-\ref{fig:ColorIndicesLine}.
}
\end{figure} %

The preceding constructions naturally lead us to consider the \emph{unimodal Jensen-Shannon} divergence
\begin{equation}
\label{eq:UJS}
J_\cap(f) := \sup_{(\pi, p) \models f} J(\pi, p)
\end{equation}
as a simultaneous measure of both the topological and information-theoretical complexity of $f$, and
\begin{equation}
\label{eq:TME}
(\pi_\cap, p_\cap) := \arg \max_{(\pi, p) \models f} J(\pi, p)
\end{equation}
as a topologically and information-theoretically optimal \emph{topological mixture estimate} (TME). 

The natural questions are if such an estimate exists (is the supremum attained?), is unique, and if so, how to perform TME in practice. In this paper we address these questions for the case $d = 1$, and we demonstrate the utility of TME in examples (see Figures \ref{fig:OldFaithfulArea}-\ref{fig:CvLpmodeTmeBumpHunting}).

After reviewing related work in \S \ref{sec:Related}, we cover the basic algorithm of TME in \S \ref{sec:Algorithm}. The proof therein that Algorithm \ref{alg:tme} computes \eqref{eq:TME} is nearly trivial with Lemmas \ref{lem:convexity} and \ref{lem:unimodality} in hand: these are respectively in appendices \S \ref{sec:Convexity} and \S \ref{sec:PreservingUnimodality}. Next, in \S \ref{sec:TDE} we review the related technique of \emph{topological density estimation} (TDE) before showing in \S \ref{sec:blurringDeblurring} how blurring and deblurring mixture estimates can usefully couple TDE and TME. Finally, in \S \ref{sec:examples} we produce examples of TME in action before making some closing remarks in \S \ref{sec:remarks}.

\section{\label{sec:Related}Related Work}

While density estimation enables various clustering techniques \cite{li2007nonparametric, JSSv057i11,xu2015clustering}, mixture estimation is altogether more powerful than clustering: e.g., it is possible to have mixture components that significantly and meaningfully overlap. For example, a cluster with a bimodal density will usually be considered as arising from two unimodal mixture components that are individually of interest. In this light and in view of its totally nonparametric nature, our approach can be seen as particularly powerful, particularly when coupled with TDE and deblurring/reblurring (see \S \ref{sec:TDE} and \S \ref{sec:blurringDeblurring}).

Still, even for clustering (even in one dimension, where an optimal solution to $k$-means can be computed efficiently \cite{wang2011ckmeans, nielsen2014optimal, gronlund2017fast}), determining the number of clusters in data \cite{feng2007pgmeans,mirkin2011choosing} is as much an art as a science. All of the techniques we are aware of either require some \emph{ad hoc} determination to be made, require auxiliary information (e.g., \cite{tibshirani2001estimating}) or are parametric in at least a limited sense (e.g., \cite{sugar2003finding}). While a parametric approach allows likelihoods and thus various information criteria \cite{burnham2003model} or their ilk to be computed for automatically determining the number of clusters, this comes at the cost of a strong modeling assumption, and criteria values themselves are difficult to compare meaningfully \cite{melnykov2010finite}.

These shortcomings--including determining the number of mixture components--carry over to the more difficult problem of mixture estimation. \cite{mclachlan2004finite, melnykov2010finite, mclachlan2014number} As an example, an \emph{ad hoc} and empirically derived unimodal mixture estimation technique that requires one of a few common functional forms for the mixture components has been recently employed in \cite{mints2017unified}. Univariate model-based mixtures of skew distributions admit EM-type algorithms and can outperform Gaussian mixture models \cite{lin2007robust, basso2010robust}. Though these generalize to the multivariate case quite effectively (see, e.g., \cite{lee2016finite}), the EM-type algorithms are generically vulnerable to becoming trapped in local minima without good initial parameter values, and they require some model selection criterion to determine the number of mixture components, though the parameter learning and model selection steps can be integrated as in \cite{figueiredo2002unsupervised}. A Bayesian nonparametric mixture model that incorporates many--but not arbitrary--unimodal distributions is considered in \cite{rodriguez2014univariate}. Principled work has been done on estimating mixtures of log-concave distributions \cite{walther2009inference} and \cite{chan2013learning} describes how densities of discrete unimodal mixtures can be estimated. However, actually estimating generic unimodal mixtures themselves appears to be unaddressed in the literature, even in one dimension. Indeed, even estimating individual modes and their associated uncertainties or significances has only been addressed recently \cite{genovese2016non, mukhopadhyay2017large}.

\section{\label{sec:Algorithm}The Basic Algorithm}

Given $f$, the ``sweep'' algorithm of \cite{baryshnikov2011unimodal} yields $(\pi, p) \models f$. We will repeatedly perturb $(\pi, p)$ to obtain \eqref{eq:TME} using Lemmas \ref{lem:convexity} and \ref{lem:unimodality}, which are respectively in \S \ref{sec:Convexity} and \S \ref{sec:PreservingUnimodality}. Lemma \ref{lem:convexity} states that that $J$ is convex under perturbations of $(\pi, p)$ that preserve $\langle \pi, p \rangle$. Lemma \ref{lem:unimodality} is a characterization of perturbations of two components of a piecewise affine and continuous (or piecewise constant) mixture that preserve the predicate $(\pi, p) \models f$, i.e., that preserve unimodality (as in Figure \ref{fig:unimodalGiveTake}) and the mixture density. Together, these results entail Theorem \ref{thm:tme}, which establishes that greedy unimodality- and density-preserving local perturbations of pairs of mixture components converge to \eqref{eq:TME}. 

\begin{figure}[htbp]
\includegraphics[trim = 22mm 10mm 15mm 5mm, clip, width = \columnwidth, keepaspectratio]{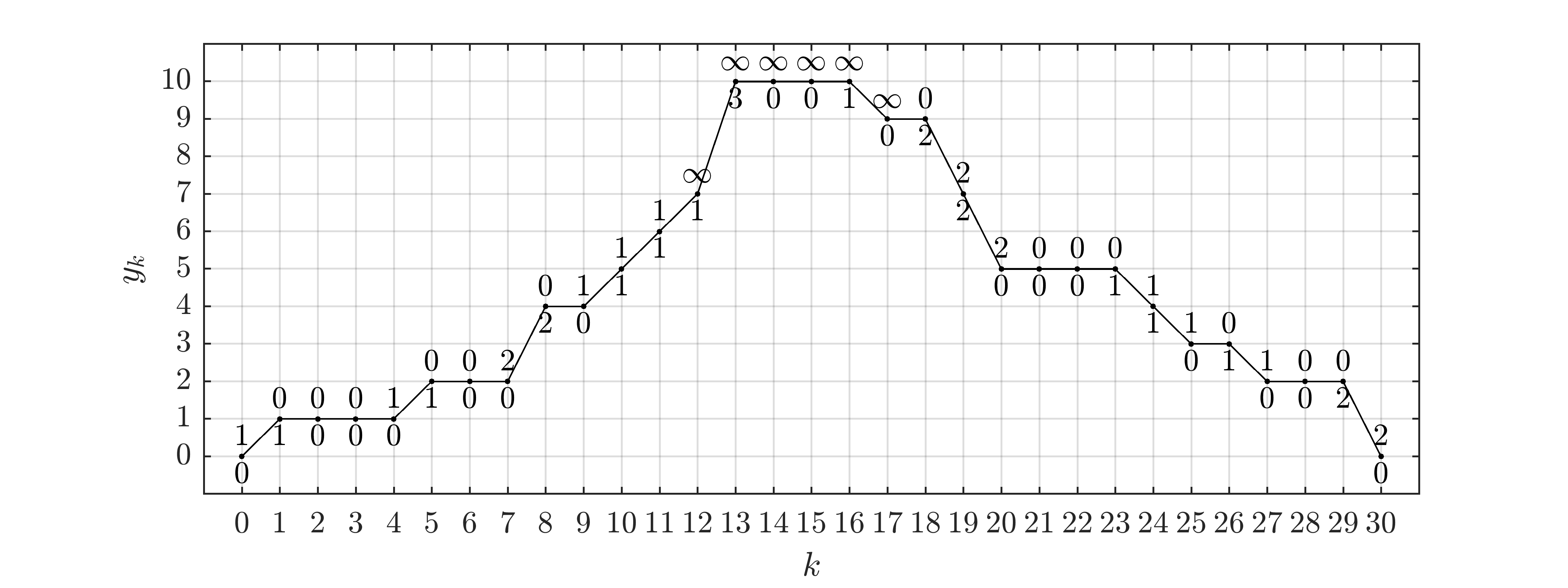}
\caption{ \label{fig:unimodalGiveTake} A unimodal sequence with unimodality-saturating local perturbations $\varepsilon^\pm$ from Lemma \ref{lem:unimodality} in \S \ref{sec:PreservingUnimodality} indicated above and below.
} 
\end{figure} %

\begin{theorem}
\label{thm:tme}
Let $-\infty = x_{-1} < x_0 < \dots < x_N < x_{N+1} = \infty$ and $f$ be piecewise constant (or affine) over each $[x_k,x_{k+1}]$. Then Algorithm \ref{alg:tme} efficiently computes \eqref{eq:TME}.
\end{theorem}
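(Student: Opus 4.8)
The plan is to realize Algorithm \ref{alg:tme} as a finite-dimensional convex optimization over a polytope and then argue that greedy pairwise perturbations reach its maximum. First I would fix the combinatorial data produced by the sweep algorithm: starting from $(\pi,p) \models f$, the number of components $M = \text{ucat}(f)$ is determined at the outset, and I claim it is invariant under every perturbation the algorithm performs. This is exactly the content of Lemma \ref{lem:unimodality}, which characterizes precisely those perturbations of two components that preserve both $\langle \pi, p \rangle = f$ and unimodality of each component; since the perturbations neither create nor destroy components and keep the mixture density equal to $f$, the predicate $(\pi,p) \models f$ is maintained, so we never leave the fiber of admissible decompositions of $f$ with exactly $\text{ucat}(f)$ pieces.

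Next I would set up the state space. Because $f$ is piecewise constant (or affine) on the $N+1$ cells $[x_k, x_{k+1}]$, an admissible unimodal decomposition is encoded by finitely many real parameters — e.g. the heights (or the affine coefficients) of each component on each cell, subject to: nonnegativity, the linear constraint that the weighted components sum to $f$ on each cell, and the unimodality (``single interval of superlevel sets'') constraints, which for piecewise affine functions are again finitely many linear inequalities once the location of each component's mode-cell is fixed. Thus the set of admissible decompositions compatible with a fixed assignment of mode-cells is a (closed, bounded) polytope $P$, and the full admissible set is a finite union of such polytopes. In particular the feasible region is compact, so the supremum in \eqref{eq:UJS} is attained and \eqref{eq:TME} is well-defined — this disposes of the existence question raised after \eqref{eq:TME}. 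Now invoke Lemma \ref{lem:convexity}: $J(\pi,p)$ is convex along any family of density-preserving perturbations, hence convex on each polytope $P$. A convex function on a polytope attains its maximum at a vertex, and more importantly, the maximum over $P$ is characterized by a first-order (KKT/no-ascending-edge) condition that can be checked and, if violated, improved along a single edge — and edges of $P$ correspond exactly to perturbing two components against each other (a ``give-and-take'' as in Figure \ref{fig:unimodalGiveTake}), which is precisely the move Lemma \ref{lem:unimodality} permits. So the greedy pairwise step of Algorithm \ref{alg:tme} is exactly coordinate/edge ascent on a convex objective over a polytope, and it terminates at a point where no admissible pairwise perturbation increases $J$.

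The remaining gap, and what I expect to be the main obstacle, is the passage from ``no pairwise perturbation helps'' to ``global optimum of \eqref{eq:TME},'' together with uniqueness and the word \emph{efficiently}. For optimality I would argue that the edges available at a point of $P$, namely all admissible two-component transfers, positively span the tangent cone of $P$ at that point (any feasible direction is a nonnegative combination of such pairwise transfers, since a density-preserving change to the whole decomposition can be routed through a sequence of two-at-a-time adjustments on each cell); convexity of $J$ then forces a point with no ascending edge to be globally maximal on $P$, and since the sweep output already fixes the right polytope (equivalently, one checks the finitely many mode-cell assignments, or argues the topological type is persistence-stable and preserved as asserted in \S\ref{sec:Introduction}), this is the global max over all of \eqref{eq:TME}. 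Uniqueness would follow from strict concavity of $H$ in the relevant directions, making $J$ strictly convex transverse to the (lower-dimensional) set of decompositions that leave every $p_m$ unchanged up to the allowed degrees of freedom — so the maximizing mixture density data are unique even if a measure-zero reparametrization is not. Finally, efficiency: each pairwise update is a one-dimensional concave... (convex) line-search with a closed-form or monotone optimum, there are $O(N)$ cells and $O(M)$ components so $O(M N)$ candidate moves per sweep, and convexity plus the polytope structure bounds the number of sweeps; I would cite the $d=1$ unimodal decomposition machinery of \cite{baryshnikov2011unimodal} for the initial $O(N)$ sweep cost and assemble these into an overall polynomial bound. The delicate points to get right in the full proof are (i) that the union-of-polytopes feasible set does not let the algorithm get stuck at a spurious local max on a lower-dimensional face, which is where the positive-spanning claim for pairwise transfers does the real work, and (ii) bookkeeping the piecewise-affine unimodality constraints so that ``mode-cell assignment'' genuinely indexes finitely many polytopes.
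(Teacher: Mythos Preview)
Your proposal takes essentially the same route as the paper: invoke Lemma \ref{lem:convexity} for convexity of $J$ along density-preserving perturbations and Lemma \ref{lem:unimodality} for the admissible pairwise moves, then conclude that greedy local search converges to the unique maximizer with a polynomial operation count. The paper's own proof is a three-sentence sketch that simply cites the two lemmas, asserts convergence in $O(MN)$ iterations with uniqueness ``by convexity,'' and tallies $O(M^2N)$ trial perturbations per iteration at $O(MN)$ arithmetic each; it does not engage with the union-of-polytopes structure or the positive-spanning-of-the-tangent-cone issue you flag as the delicate points, so your outline is, if anything, more careful than the original about where the real work lies.
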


\begin{proof} 
By Lemma \ref{lem:convexity} (see \S \ref{sec:Convexity}), greedily and locally perturbing the mixture $(\pi, p) \models f$ according to Lemma \ref{lem:unimodality} (see \S \ref{sec:PreservingUnimodality}), then updating the mixture according to the perturbation which optimizes $J$ gives the desired result in $O(M N)$ iterations. This result is unique by convexity. Each iteration requires $O(M^2 N)$ trial perturbations, each of which in turn requires $O(M N)$ arithmetic operations to evaluate $J$.
\end{proof}

\begin{algorithm}[tb]
   \caption{Topological Mixture Estimation (TME)}
   \label{alg:tme}
\begin{algorithmic}
   \STATE {\bfseries Input:} function data $\{x_k,f(x_k)\}$
   \STATE Initialize $(\pi, p) \models f$ as in \cite{baryshnikov2011unimodal}
   \REPEAT
   \FOR{each evaluation point and pair of components}
   \STATE Greedily perturb $(\pi,p)$ to $(\pi',p')$ using Lemma \ref{lem:unimodality}
   \ENDFOR
   \STATE Update $(\pi,p) = \arg \max J(\pi',p')$
   \UNTIL{$(\pi,p)$ and/or $J(\pi,p)$ converge}
   \STATE {\bfseries Output:} $(\pi,p)$
\end{algorithmic}
\end{algorithm}

\section{\label{sec:TDE}Topological Density Estimation}

The obvious situation of practical interest for TME is that a density has been obtained from a preliminary estimation process involving some sample data. There is a natural approach to this preliminary estimation process called topological density estimation (TDE) \cite{huntsman2017topological} that naturally dovetails with TME.

\subsection{\label{sec:tdeIdea}Idea}

We recall the basic idea here (for pseudocode, see Algorithm \ref{alg:tde}). Given a kernel $K$ and sample data $X_j$ for $1 \le j \le n$, and for each proposed bandwidth $h$, compute the kernel density estimate \cite{silverman1986density, chen2017tutorial}
\begin{equation}
\label{eq:KDE}
\hat f_{h;X} := \frac{1}{n} \sum_{j=1}^n K_{X_j,h}
\end{equation}
where $K_{\mu,\sigma}(x) := \frac{1}{\sigma} K ( \frac{x-\mu}{\sigma} )$. Next, compute
\begin{equation}
\label{eq:ucatKDE}
u_X(h) := \text{ucat}(\hat f_{h;X})
\end{equation}
and estimate the unimodal category of the PDF that $X$ is sampled from via
\begin{equation}
\label{eq:ucatEstimate}
\hat m_X := \arg \max_m \mu(u_X^{-1}(m))
\end{equation}
where $\mu$ denotes an appropriate measure (nominally counting measure or the pushforward of Lebesgue measure under the transformation $h \mapsto 1/h$). 

\eqref{eq:ucatEstimate} gives the most prevalent and \emph{topologically persistent} \cite{ghrist2014elementary, oudot2015persistence} value of the unimodal category, i.e., this is a topologically robust estimate of the number of components required to produce the PDF that $X$ is sampled from as a mixture. While \emph{any} element of $u_X^{-1}(\hat m_X)$ is a bandwidth consistent with the estimate \eqref{eq:ucatEstimate}, considerations of robustness lead us to typically make the more detailed nominal specification
\begin{equation}
\label{eq:TDE}
\hat h_X := \text{median}_\mu ( u_X^{-1}(\hat m_X) ).
\end{equation}

\begin{algorithm}[tb]
   \caption{Topological Density Estimation (TDE)}
   \label{alg:tde}
\begin{algorithmic}
   \STATE {\bfseries Input:} $\{X_j\}$
   \FOR{each proposed bandwidth $h$}
   \STATE Compute $u_X(h)$ using \eqref{eq:ucatKDE}
   \ENDFOR
   \STATE Compute $\hat m_X$ using \eqref{eq:ucatEstimate}
   \STATE {\bfseries Output:} $\hat h_X$ using \eqref{eq:TDE}
\end{algorithmic}
\end{algorithm}

\subsection{\label{sec:tdePerformance}Performance}

TDE turns out to be very computationally efficient relative to the traditional technique of cross-validation (CV). On highly multimodal densities, TDE is competitive or at least reasonably performant relative to CV and other nonparametric density estimation approaches with respect to traditional statistical evaluation criteria. Moreover, TDE outperforms other approaches when \emph{qualitative} criteria such as the number of local maxima and the unimodal category itself are considered (see Figures \ref{fig:fkm}-\ref{fig:fkmEval500MatlabLMAX}). In practice, such qualitative criteria are generally of paramount importance. For example, precisely estimating the shape of a density is generally less important than determining if it has multiple modes. 

As an illustration, consider $\mu(j,m) := \frac{j}{m+1}$, $\sigma(k,m) := 2^{-(k+2)}(m+1)^{-2}$ and the family of distributions
\begin{equation}
\label{eq:fkm}
f_{km} := \frac{1}{m} \sum_{j = 1}^m K_{\mu(j,m),\sigma(k,m)}
\end{equation}
for $1 \le k \le 3$ and $1 \le m \le 10$, and where here $K$ is the standard Gaussian density: see Figure \ref{fig:fkm}. 
Exhaustive details relating to the evaluation of TDE on this family and other densities are in the software package and test suite \cite{BAETDE}: here, we merely show performance data for \eqref{eq:fkm} in Figures \ref{fig:fkmEval500MatlabUCAT} and \ref{fig:fkmEval500MatlabLMAX}.

\begin{figure}[htbp]
\includegraphics[trim = 30mm 95mm 30mm 20mm, clip, width=\columnwidth,keepaspectratio]{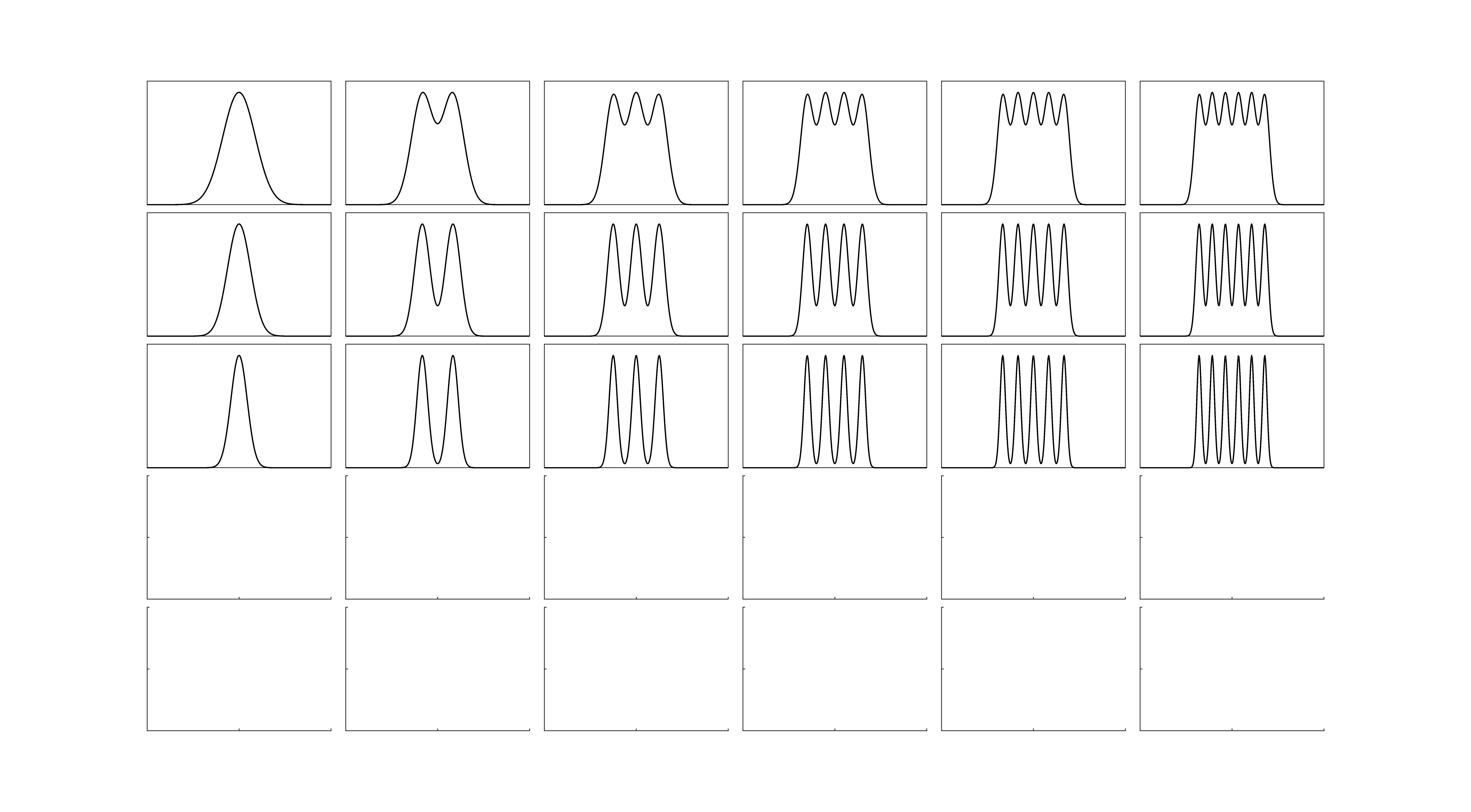}
\caption{ \label{fig:fkm} The densities $f_{km}$ in \eqref{eq:fkm} for $1 \le k \le 3$ and $1 \le m \le 6$ over $[-0.5,1.5]$. Rows are indexed by $k$; columns by $m$. The upper left panel shows $f_{11}$ and the lower right panel shows $f_{36}$. 
} 
\end{figure} %

\begin{figure}[htbp]
\includegraphics[trim = 5mm 2mm 10mm 3mm, clip, width=\columnwidth, keepaspectratio]{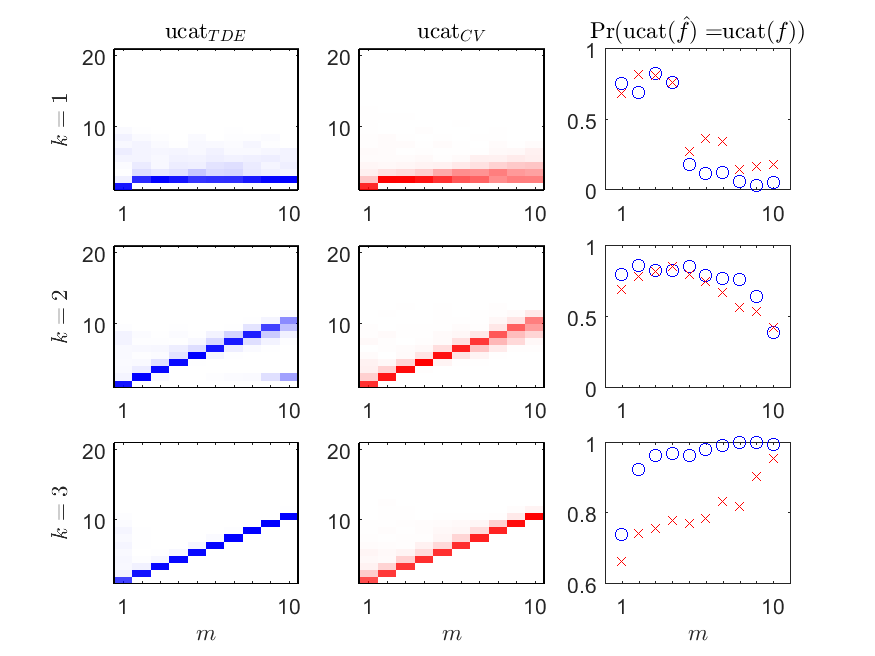}
\caption{ \label{fig:fkmEval500MatlabUCAT} Pseudotransparency plots of performance measures relating to the unimodal category for the family \eqref{eq:fkm} depicted in Figure \ref{fig:fkm} with $n = 500$ and using a Gaussian kernel. From left to right, we show empirical distributions of $\text{ucat}$ {\color{blue}(blue) for TDE}, $\text{ucat}$ {\color{red}(red) for CV}, and the empirical probability that the estimate of $\text{ucat}$ is correct. From top to bottom, we show $k = 1, \dots 3$. Each panel has $m = 1, \dots 10$ along the horizontal axis.}
\end{figure} %

\begin{figure}[htbp]
\includegraphics[trim = 5mm 2mm 10mm 3mm, clip, width=\columnwidth, keepaspectratio]{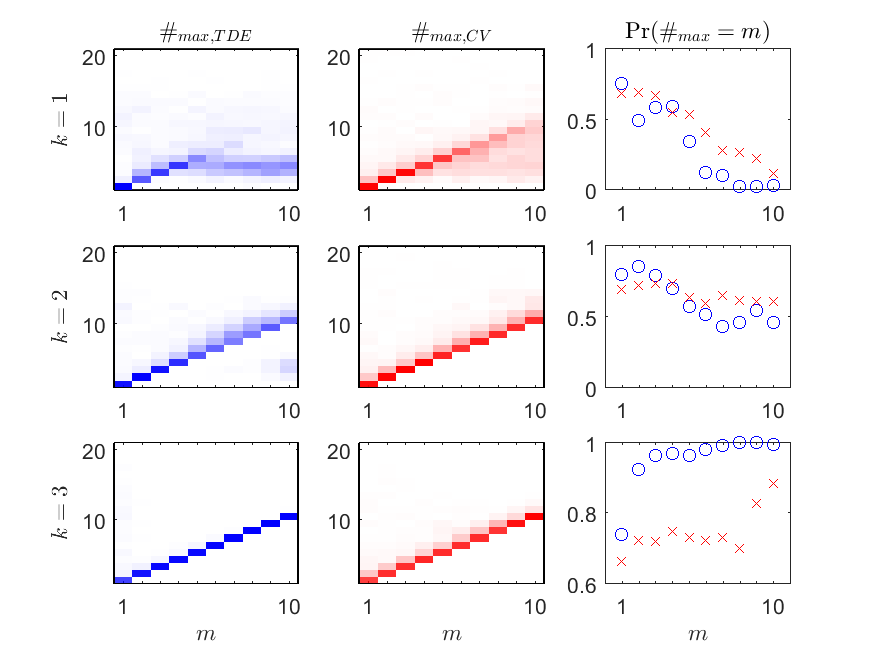}
\caption{ \label{fig:fkmEval500MatlabLMAX} As in Figure \ref{fig:fkmEval500MatlabUCAT}, but for the number of local maxima.}
\end{figure} %

TDE has the very useful feature (shared by essentially no high-performing density estimation technique other than CV) that it requires no free parameters or assumptions. Indeed, TDE can be used to evaluate its own suitability: for unimodal distributions, it is often not an ideal choice--but it is good at detecting this situation in the first place. Furthermore, TDE is very efficient computationally. 

In situtations of practical interest, it is tempting to couple TDE and TME in the obvious way: i.e., perform them sequentially and indepdently. This yields a completely nonparametric estimate of a mixture from sample data alone. However, there is a much better way to couple these techniques, as we shall see in the sequel.

\section{\label{sec:blurringDeblurring}Blurring and Deblurring}

\subsection{\label{sec:blurring}Blurring}

Recall that a log-concave function is unimodal, and moreover that a function is log-concave iff its convolutions with unimodal functions are identically unimodal \cite{ibragimov1956composition, keilson1971some, bertin2013unimodality}. This observation naturally leads to the following question: if $(\pi, p) \models f$, how good of an approximation to the $\delta$ distribution must a log-concave density $g$ be in order to have $(\pi, p * g) \models f * g$? In particular, suppose that $g$ is a Gaussian density: what bandwidth must it have? An answer to this question of how much blurring a minimal unimodal mixture model can sustain defines a topological scale (viz., the persistence of the unimodal category under blurring) that we proceed to illustrate in light of TDE.

In this paragraph we assume that $K$ is the standard Gaussian density, so that $K_{\mu,h} * K_{\mu',h'} = K_{\mu+\mu',(h^2 + h'^2)^{1/2}}$ and $\hat f_{h;X} * K_{0,h'} = \hat f_{(h^2 + h'^2)^{1/2};X}$. Write $\hat h_X$ for the bandwidth obtained via TDE, whether via the nominal specification \eqref{eq:TDE} or any other: by construction we have that $\inf u_X^{-1}(\hat m_X) \le \hat h_X \le \sup u_X^{-1}(\hat m_X)$. Now if $(\pi, p) \models \hat f_{\hat h_X;X}$, then $\hat m_X = u_X(\hat h_X) = |(\pi,p)|$. In order to have $(\pi, p * K_{0,h'}) \models f * K_{0,h'}$, it must be that $\hat m_X = u_X(\hat h_X) = |(\pi,p)| = |(\pi, p * K_{0,h'})| = u_X((\hat h_X^2+h'^2)^{1/2})$, i.e., 
\begin{equation}
\label{eq:blurring}
h' \le \left ( \left [ \sup u_X^{-1}(\hat m_X) \right ]^2 - \hat h_X^2 \right )^{1/2}.
\end{equation}
In particular, we have the weaker inequality involving a purely topological scale:
\begin{equation}
\label{eq:blurring2}
h' \le \left ( \left [ \sup u_X^{-1}(\hat m_X) \right ]^2 - \left [ \inf u_X^{-1}(\hat m_X) \right ]^2 \right )^{1/2}.
\end{equation}

The preceding considerations generalize straightforwardly if we define $u_f(h) := \text{ucat}(f * K_{0,h})$, where once again $K$ is a generic kernel. This generalizes \eqref{eq:ucatKDE} so long as we associate sample data with a uniform average of $\delta$ distributions. Under reasonable conditions, we can write $u_f(0) = \text{ucat}(f)$, and it is easy to see that the analogous bound is
\begin{equation}
\label{eq:blurringGeneral}
h' \le \sup u_f^{-1}(u_f(0)).
\end{equation}

Of course, \eqref{eq:blurringGeneral} merely restates the triviality that the blurred mixture ceases to be minimal precisely when the number of mixture components exceeds the unimodal category of the mixture density. Meanwhile, the special case furnished by TDE with the standard Gaussian kernel affords sufficient structure for a slightly less trivial statement.

\subsection{\label{sec:deblurring}Deblurring/reblurring}

The considerations of \S \ref{sec:blurring} suggest how to couple TDE and TME in a much more effective way than performing them sequentially and independently. The idea is to use a Gaussian kernel and instead of \eqref{eq:TDE}, pick the bandwidth
\begin{equation}
\label{eq:TDEminimal}
\hat h_X^{(-)} := \text{inf}_\mu ( u_X^{-1}(\hat m_X) )
\end{equation}
and then perform TME; finally, convolve the results with $K_{0,\Delta h}$ where
\begin{equation}
\label{eq:Deltah}
\Delta h := \left ( \hat h_X^2 - \left [ \hat h_X^{(-)} \right ]^2 \right )^{1/2}.
\end{equation}
This preserves the result of TDE while giving a smoother, less artificial, and more practically useful mixture estimate than the information theoretically optimal result.

Of course, a similar tactic can be performed directly on a density $f$ by considering its Fourier deconvolution $\mathcal{F}^{-1}(\mathcal{F}f/\mathcal{F}K_{0,h'})$, where $\mathcal{F}$ denotes the Fourier transform and $h'$ is as in \eqref{eq:blurringGeneral}: however, any \emph{a priori} justification for such a tactic is necessarily context-dependent in general, and our experience suggests that its implementation would be delicate and/or prone to aliasing. Nevertheless, this would be particularly desirable in the context of heavy-tailed distributions, where kernel density estimation requires much larger sample sizes in order to achieve acceptable results. In this context it would also be worth considering the use of a symmetric stable density \cite{uchaikin1999chance, nolan2018stable} (e.g., a Cauchy density) as a kernel with the aim of recapturing the essence of \eqref{eq:Deltah}.

\begin{algorithm}[tb]
   \caption{Reblurred Topological Mixture Estimation}
   \label{alg:blur}
\begin{algorithmic}
   \STATE {\bfseries Input:} $\{X_j\}$
   \FOR{each proposed bandwidth $h$}
   \STATE Compute $u_X(h)$ using \eqref{eq:ucatKDE}
   \ENDFOR
   \STATE Compute $\hat m_X$ using \eqref{eq:ucatEstimate}
   \STATE Compute $\hat h_X^{(-)}$ using \eqref{eq:TDEminimal}
   \STATE Compute $(\pi,p)$ from $\hat f_{\hat h_X^{(-)}; X}$ using Algorithm \ref{alg:tme}
   \STATE Update $p = p * K_{0,\Delta h}$ with $\Delta h$ as in \eqref{eq:Deltah}
   \STATE {\bfseries Output:} $(\pi,p)$
\end{algorithmic}
\end{algorithm}

\section{\label{sec:examples}Examples}

\begin{figure}[htbp]
\includegraphics[trim = 5mm 10mm 10mm 10mm, clip, width=\columnwidth,keepaspectratio]{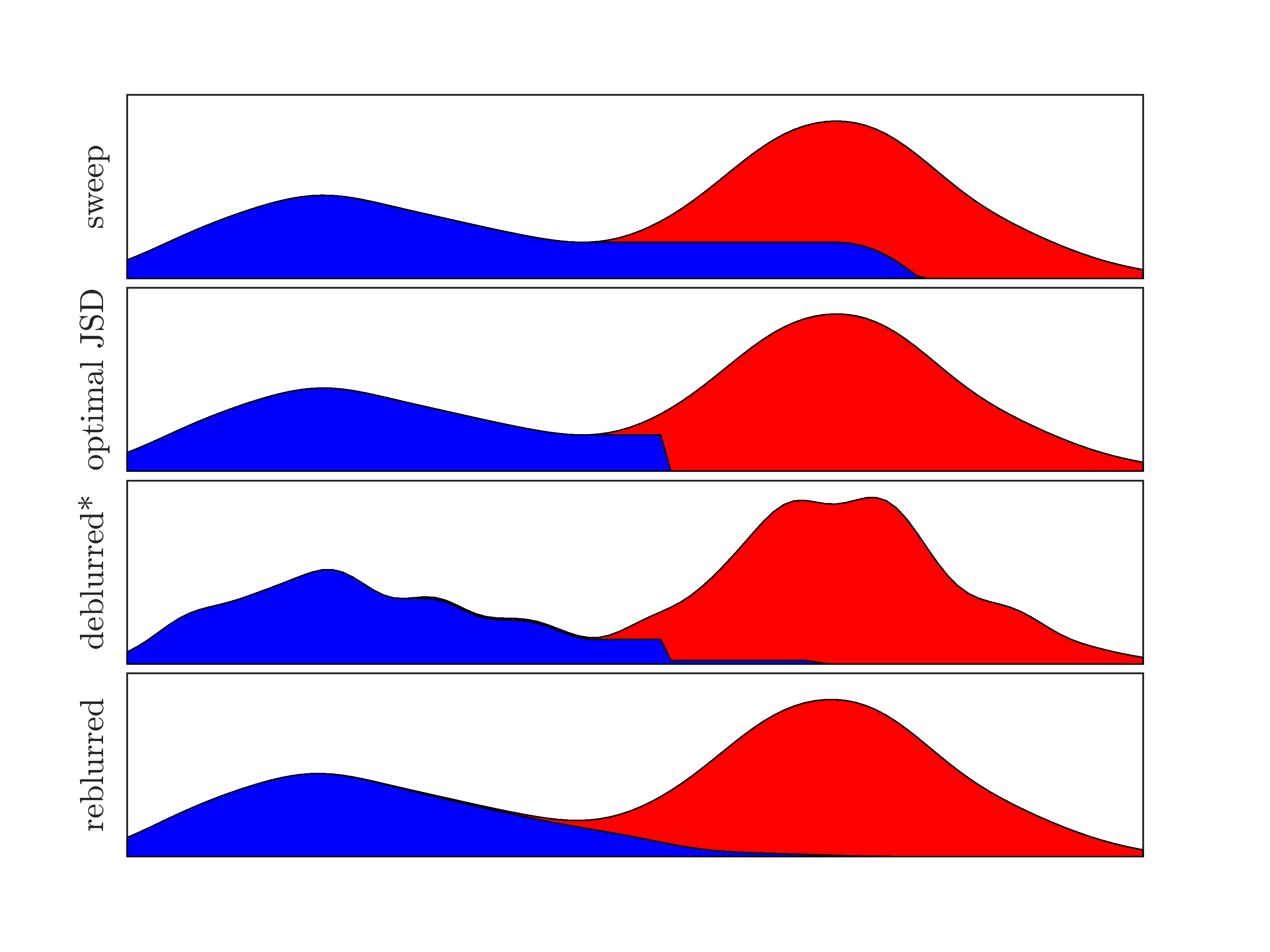}
\caption{ \label{fig:OldFaithfulArea} TME applied to $n = 272$ waiting times between eruptions of the Old Faithful geyser. Panels show area plots of unimodal decompositions obtained by (top) the ``sweep algorithm'' on a TDE with bandwidth given by \eqref{eq:TDE}; (second from top) the result of \eqref{eq:TME} on the same; (second from bottom) the ``deblurred'' result of \eqref{eq:TME} on a TDE with bandwidth given by \eqref{eq:TDEminimal}; (bottom) the result of ``reblurring'' by convolving the deblurred mixture with a Gaussian kernel with bandwidth given by \eqref{eq:Deltah}. Note that three of the four mixture estimates have the same density, but that the deblurred density is different (we highlight this with a ``*'' annotation).
} 
\end{figure} %

We present two phenomenologically illustrative examples. First, in Figures \ref{fig:OldFaithfulArea} and \ref{fig:OldFaithfulLine} we consider the $n = 272$ waiting times between eruptions of the Old Faithful geyser from the data set in \cite{hardle2012smoothing}. Then, in Figures \ref{fig:ColorIndicesArea} and \ref{fig:ColorIndicesLine} we consider the $n = 2107$ Sloan Digitial Sky Survey $g-r$ color indices accessed from the VizieR database \cite{ochsenbein2000vizier} at \url{http://cdsarc.u-strasbg.fr/viz-bin/Cat?J/ApJ/700/523} and discussed in \cite{an2009galactic}; the latter example is replicated in \cite{BAETME}.

\begin{figure}[htbp]
\includegraphics[trim = 5mm 10mm 10mm 10mm, clip, width=\columnwidth,keepaspectratio]{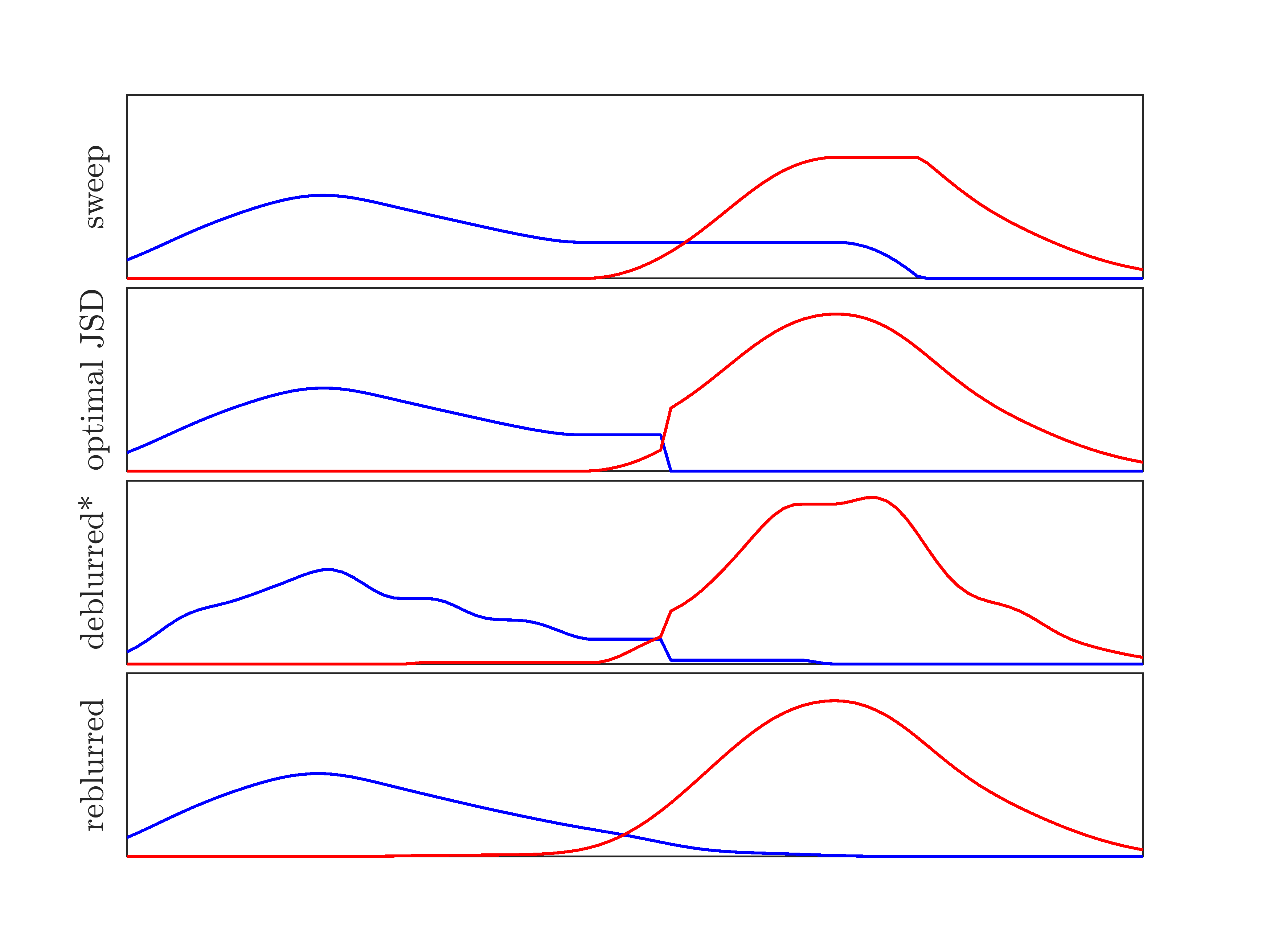}
\caption{ \label{fig:OldFaithfulLine} Line plots of the same decompositions as Figure \ref{fig:OldFaithfulArea}.
} 
\end{figure} %

\begin{figure}[htbp]
\includegraphics[trim = 5mm 10mm 10mm 10mm, clip, width=\columnwidth,keepaspectratio]{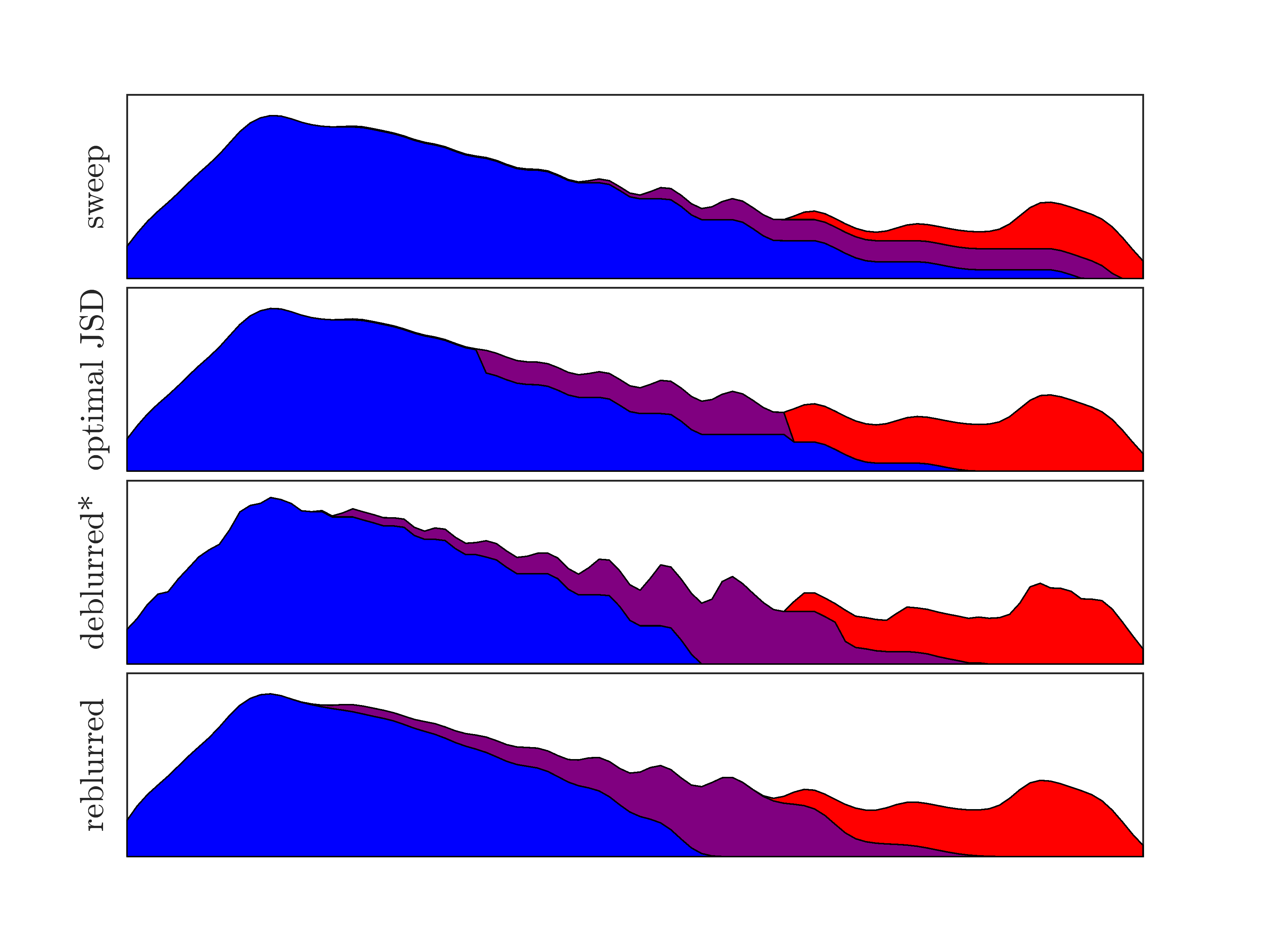}
\caption{ \label{fig:ColorIndicesArea} TME applied to $n = 2107$ $g-r$ color indices from \cite{an2009galactic}. Panels are otherwise as in Figure \ref{fig:OldFaithfulArea}.
} 
\end{figure} %

\begin{figure}[htbp]
\includegraphics[trim = 5mm 10mm 10mm 10mm, clip, width=\columnwidth,keepaspectratio]{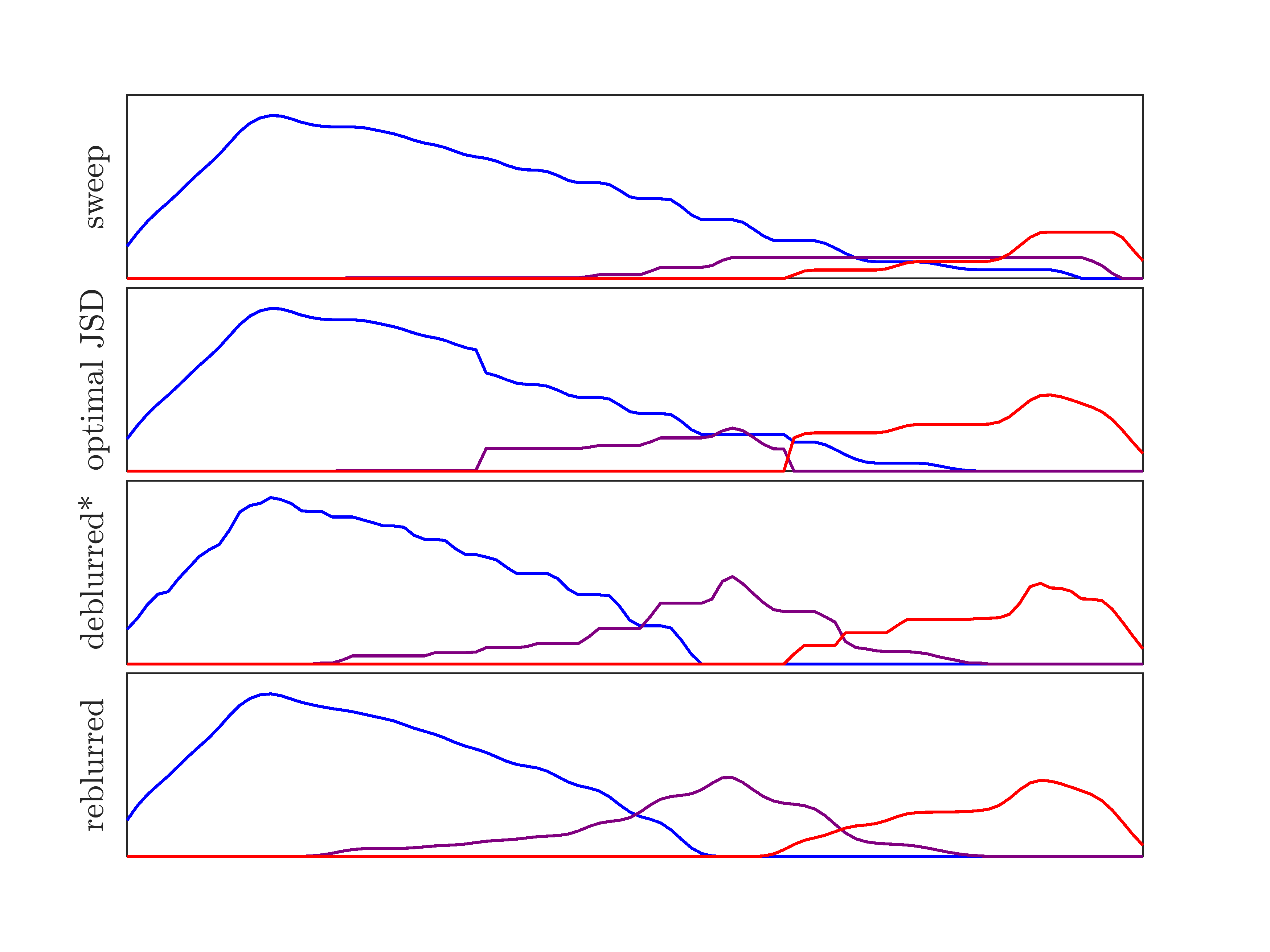}
\caption{ \label{fig:ColorIndicesLine} TME applied to $n = 2107$ $g-r$ color indices from \cite{an2009galactic}. Panels are otherwise as in Figure \ref{fig:OldFaithfulLine}.
} 
\end{figure} %

\begin{figure}[htbp]
\includegraphics[trim = 0mm 20mm 10mm 20mm, clip, width=\columnwidth,keepaspectratio]{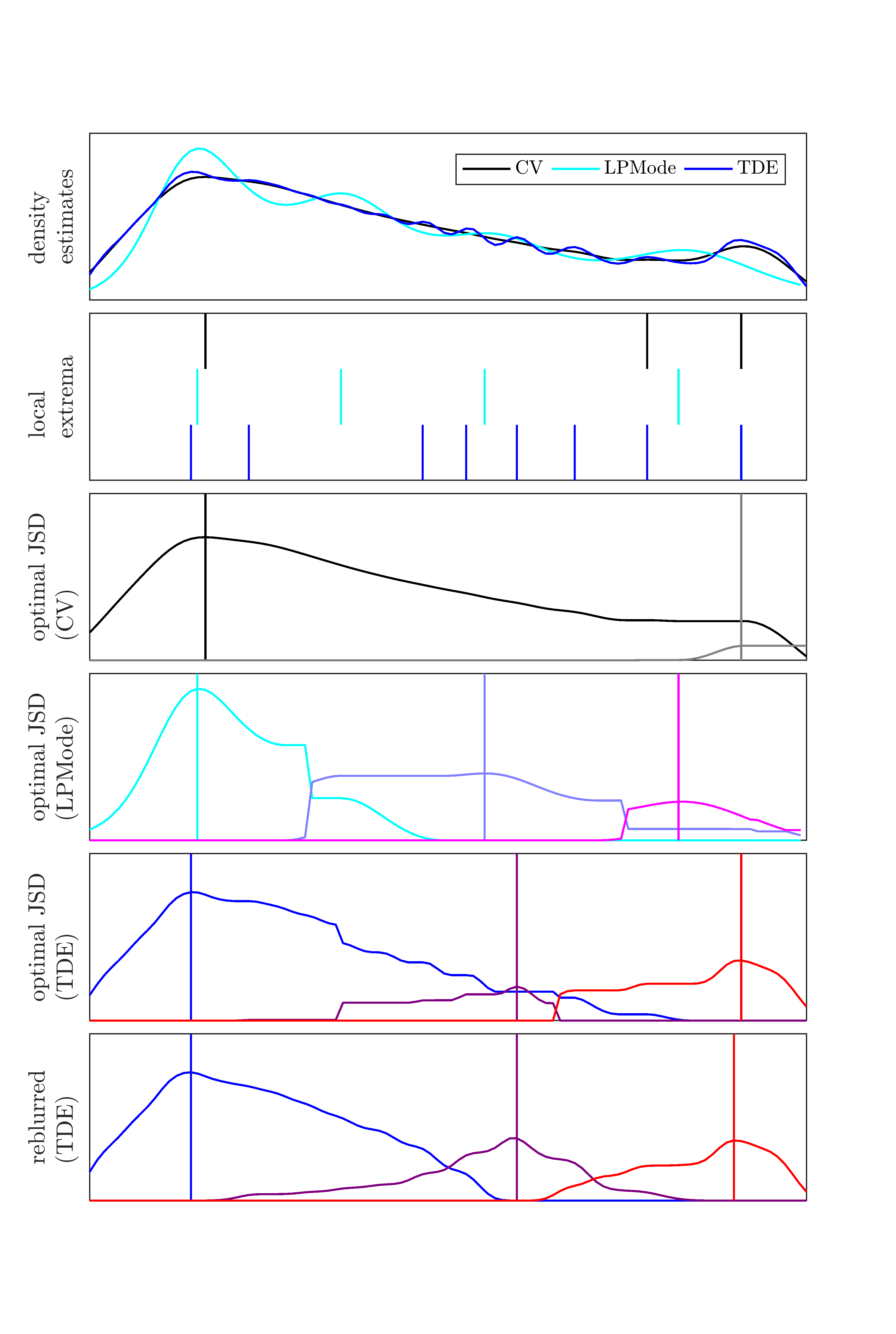}
\caption{ \label{fig:CvLpmodeTmeBumpHunting}Structural analyses of the color index data from Figure \ref{fig:ColorIndicesLine}. Top panel: density estimates from CV, the default LPMode algorithm of \cite{mukhopadhyay2017large} (kindly provided by its author), and TDE. Second panel from top: the local maxima of the density estimates above. Third through fifth panels from top: the result of \eqref{eq:TME} on CV, LPMode, and TDE, respectively, with maxima of components indicated. Bottom panel: the reblurred mixture from the bottom panel of Figure \ref{fig:ColorIndicesLine}, augmented with component maxima.}
\end{figure} 

As suggested, several phenomena are readily apparent from these examples. First, mixtures obtained via the sweep algorithm are manifestly parity-dependent, i.e., the direction of sweeping matters; second, mixtures obtained via TME alone exhibit artificial anti-overlapping behavior; third, deblurring followed by reblurring preserves unimodality, the overall density, a topologically persistent invariant (viz., the unimodal category) and the spirit of information-theoretical optimality while producing an obviously better behaved mixture; fourth and finally, the various techniques involved here can significantly shift classification/decision boundaries based on the dominance of various mixture components.

While the data in Figures \ref{fig:OldFaithfulArea} and \ref{fig:OldFaithfulLine} is at least qualitatively approximated by a two-component Gaussian mixture, it is clear that a three-component Gaussian mixture cannot capture the highly oscillatory behavior of the density in Figures \ref{fig:ColorIndicesArea} and \ref{fig:ColorIndicesLine}. Indeed, this example illustrates how such oscillatory behavior can actually arise from a unimodal mixture with many fewer components than might naively appear to be required.

Figure \ref{fig:CvLpmodeTmeBumpHunting} shows that there are strong and independent grounds to conclude that the color index data of Figure \ref{fig:ColorIndicesLine} is produced by a unimodal mixture of three components, with componentwise modes as suggested by TME, and furthermore that CV undersmooths this data. For each density estimate shown, each componentwise maximum of the corresponding mixture \eqref{eq:TME} is virtually identical to one of the local maxima of the density estimate itself: this is a consequence of the anti-overlapping tendency described above. 

In particular, the fourth panel of Figure \ref{fig:CvLpmodeTmeBumpHunting} illustrates that it is possible and potentially advantageous to use TME as an alternative mode identification technique in the LPMode algorithm of \cite{mukhopadhyay2017large}. Furthermore, while we have not implemented a reliable Fourier deconvolution/reblurring algorithm of the sort hinted at in \S \ref{sec:deblurring}, the fifth and sixth panels of Figure \ref{fig:CvLpmodeTmeBumpHunting} suggest that this is not particularly important for the narrowly defined task of mode finding/bump hunting.

\section{\label{sec:remarks}Remarks}

While the $O(M^4 N^3)$ arithmetic operations of TME as implemented in Algorithm \ref{alg:tme} might seem uncomfortably high, in practice $M$ is generally quite small and it is reasonable to enforce $N = 10^2$ as in fact we do in the MATLAB implementation \cite{BAETME} and \S \ref{sec:examples}. Furthermore, avoiding redundant perturbations and incrementally computing $J$ would dramatically reduce the computational complexity. Still, even without the benefit of any such refinements, the color index example in \S \ref{sec:examples} runs in less than 40 seconds. 

Note also that TDE and de/reblurring as respectively implemented in Algorithms \ref{alg:tde} and \ref{alg:blur} are relatively computationally inexpensive. The former case is helped by resampling data to on the order of $10^3$ quantiles, which has no material effect on the output in cases where kernels are appropriate to use. That said, using a fast generalized Gauss transform \cite{spivak2010fast} would dramatically accelerate TDE.

Even though TME is presently limited to one dimension, it is still very useful due to the preponderance of one-dimensional problems. For instance, TME is a good candidate to improve on some of the best practical unsupervised image thresholding techniques \cite{kapur1985thresholding,kittler1986thresholding,sezgin2004thresholding}, and has prospects for enhancing data/sensor fusion, information-theoretical analysis of time series (by using sliding time windows to define samples), and many other tasks. 

Furthermore, the one-dimensional framework can be used with random projections of high-dimensional data in a way that is likely to yield improvements for model selection \cite{feng2007pgmeans} (cf. \cite{kalai2012disentangling}) and anomaly detection \cite{pevny2016loda}. We plan to explore these topics in future work, with the associated intent of gauging the art of the possible with respect to determining unimodal decompositions in dimension $> 1$.

That said, the extension of TME to dimension $> 1$ will require effort and mathematical tools well beyond those used in this paper. One reason is that computing a unimodal decomposition is algorithmically undecidable in high dimensions, a property inherited from the problem of determining contractibility of simplicial complexes \cite{tancer2016collapsible} and the geometric realization theorem \cite{edelsbrunner2010topology} applied to level and upper excursion sets. Therefore, extending the constructions of this paper will require some modification of the notion of unimodal category in dimension $> 1$, approximations and/or heuristics. 

For example, restricting to convex versus contractible upper excursion sets is probably desirable on intuitive as well as computational grounds. However, even in two dimensions the corresponding problem of constructing minimal convex partitions of polygons is still NP-hard. On the other hand, the case without interior holes is efficiently solvable \cite{orourke2017polygons} and there is a quasi-polynomial time approximation scheme for the general case \cite{bandyapadhyay2015approximation}. The heuristic of \cite{liu2010convex} seems to be a good starting point for exploring relevant tradeoffs.

An appropriate tactic for directly leveraging the one-dimensional framework \emph{en route} to higher dimensions appears to be tomography in the spirit of the topological Radon transform \cite{ghrist2014elementary}. Besides working with random projections in this vein, another sensible approach (suggested to the author by Robert Ghrist) is to foliate \cite{lawson1970foliations} the domain of a sample (in practice this would just mean taking a family of parallel lines) and perform TME on data in tubular neighborhoods of nearby leaves of the foliation, then assemble the results, essentially by interpolating. Here topological tools such as sheaves and Morse theory seem inevitably to be required in order to do things in a globally coherent way. A suitable member of the class of metrics on mixtures introduced in \cite{liu2000distance} that provides data relevant for assembly as a byproduct will likely also be necessary.

Finally, we note that there are prospects for recursively coupling TME and TDE: the idea here is to pull mixture components back to weighted subsamples, then re-run TDE (or in the unimodal case, CV) on these individually. The resulting variable-bandwidth mixture estimator would give a multiresolution description of data that would be be likely to yield further improvements in many applications.

\appendix

\section{\label{sec:Convexity}Convexity}

The following lemma shows that $J$ is convex as we gradually shift part of one mixture component to another.

\begin{lemma}
\label{lem:convexity}
Let $|(\pi,p)| = 3$ and define
\begin{eqnarray}
\pi_{12,t} & := & \pi_1 + (1-t) \pi_2; \nonumber \\
\pi_{23,t} & := & t \pi_2 + \pi_3; \nonumber \\
p_{12,t} & := & \frac{\pi_1 p_1 + (1-t) \pi_2 p_2}{\pi_{12,t}}; \nonumber \\
p_{23,t} & := & \frac{t \pi_2 p_2 + \pi_3 p_3}{\pi_{23,t}}, \nonumber
\end{eqnarray}
so that $\left \langle \left ( \pi_{12,t}, \pi_{23,t} \right ), \left ( p_{12,t}, p_{23,t} \right ) \right \rangle = \langle \pi, p \rangle$.
The function $g_{\pi,p} : [0,1] \rightarrow [0,\infty)$ defined by
\begin{equation}
g_{\pi,p}(t) := J \left ( \left ( \pi_{12,t}, \pi_{23,t} \right ), \left ( p_{12,t}, p_{23,t} \right ) \right )
\end{equation}
satisfies
\begin{equation}
g_{\pi,p}(t) \le t \cdot g_{\pi,p}(1) + (1-t) \cdot g_{\pi,p}(0).
\end{equation}
\end{lemma}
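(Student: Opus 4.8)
The plan is to reduce the claimed convexity inequality to a statement about a single real-variable function and then exploit the structure of $J$ as a difference of entropies. First I would write out $g_{\pi,p}(t)$ explicitly using \eqref{eq:JS}. Since $\langle (\pi_{12,t},\pi_{23,t}),(p_{12,t},p_{23,t})\rangle = \langle \pi,p\rangle$ is independent of $t$, the term $H(\langle \pi,p\rangle)$ in $J$ is constant in $t$, so
\begin{equation}
g_{\pi,p}(t) = H(\langle \pi,p\rangle) - \pi_{12,t} H(p_{12,t}) - \pi_{23,t} H(p_{23,t}).\nonumber
\end{equation}
Thus proving $g_{\pi,p}(t) \le t\, g_{\pi,p}(1) + (1-t)\, g_{\pi,p}(0)$ is equivalent to proving that $t \mapsto \pi_{12,t} H(p_{12,t}) + \pi_{23,t} H(p_{23,t})$ is \emph{convex} on $[0,1]$. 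Note $\pi_{12,t} p_{12,t} = \pi_1 p_1 + (1-t)\pi_2 p_2$ and $\pi_{23,t} p_{23,t} = t\pi_2 p_2 + \pi_3 p_3$ are affine in $t$ as functions into the cone of nonnegative integrable functions, and $\pi_{12,t}, \pi_{23,t}$ are affine scalars.

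The key step is then the following general fact: the map $\Phi(\nu, q) := \nu H(q) = -\int \nu q \log q\, dx$, viewed as a function of the \emph{unnormalized} density $w := \nu q$ via $\Phi(\nu,q) = -\int w \log(w/\nu)\, dx = -\int w\log w\, dx + (\int w\, dx)\log\nu$, is convex jointly in $(w,\nu)$ on the domain where $w \ge 0$ and $\nu = \int w\, dx > 0$. Indeed, the function $(a,b) \mapsto -a \log(a/b)$ (with $a \ge 0$, $b > 0$) is the negative of the relative entropy "$a\log(a/b)$", which is jointly convex in $(a,b)$ — this is the standard log-sum / joint convexity of KL divergence, provable pointwise by checking the Hessian of $(a,b)\mapsto a\log(a/b)$ is positive semidefinite. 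Integrating a jointly convex integrand over $x$ preserves convexity. Since along our path both pairs $(\pi_{12,t}p_{12,t},\, \pi_{12,t})$ and $(\pi_{23,t}p_{23,t},\, \pi_{23,t})$ are affine functions of $t$, composing the jointly convex $\Phi$ with these affine maps yields convexity of each summand in $t$, hence convexity of their sum, which is exactly what we need. I would also note $g_{\pi,p} \ge 0$ follows from concavity of $H$ as already remarked in the excerpt, so the codomain $[0,\infty)$ is justified.

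The main obstacle I anticipate is purely a matter of careful bookkeeping rather than deep difficulty: one must confirm that the relevant integrands are integrable and that the endpoint pairs stay in the valid domain (in particular $\pi_{12,t} > 0$ and $\pi_{23,t} > 0$ for all $t\in[0,1]$, which holds since $\pi_{12,t} \ge \pi_1 > 0$ and $\pi_{23,t} \ge \pi_3 > 0$), and that degenerate cases (e.g. $p_2$ supported where $p_{12,t}$ or $p_{23,t}$ vanish) are handled by the usual conventions $0\log 0 = 0$. An alternative, more self-contained route avoiding the abstract joint-convexity lemma is to differentiate $g_{\pi,p}$ twice in $t$ directly: $\frac{d}{dt}\big(\pi_{12,t}H(p_{12,t})\big)$ and its second derivative can be computed, and after simplification the second derivative of the full sum should be manifestly nonnegative (it will reduce to an integral of a squared quantity weighted by $\pi_2$, reflecting the same relative-entropy convexity). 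I would present the joint-convexity argument as the main line since it is cleaner, and mention the direct computation as a check.
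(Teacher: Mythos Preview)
Your reduction is right, but there is a sign slip that propagates. Since $g_{\pi,p}(t) = H(\langle\pi,p\rangle) - h(t)$ with $h(t) := \pi_{12,t}H(p_{12,t}) + \pi_{23,t}H(p_{23,t})$, the inequality $g_{\pi,p}(t)\le t\,g_{\pi,p}(1)+(1-t)\,g_{\pi,p}(0)$ is equivalent to $h$ being \emph{concave}, not convex. Correspondingly, your own computation shows $\Phi(w,\nu)=-\int w\log(w/\nu)\,dx$ is the integral of $-a\log(a/b)$, which you correctly identify as the negative of a jointly convex function --- hence $\Phi$ is jointly \emph{concave}, not convex. With those two words swapped the argument goes through cleanly: a concave function composed with an affine map is concave, a sum of concave functions is concave, and you are done. (Your second-derivative remark inherits the same flip if ``full sum'' means $h$: it is $g_{\pi,p}''\ge 0$, equivalently $h''\le 0$.)

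Once corrected, your approach is a valid but slightly different packaging from the paper's. The paper avoids the perspective/KL machinery and works directly from concavity of $H$: it rewrites $p_{12,t}$ as the convex combination of $p_1$ and $p_{12}:=(\pi_1 p_1+\pi_2 p_2)/(\pi_1+\pi_2)$ with weights $t\pi_1/\pi_{12,t}$ and $(1-t)\pi_{12}/\pi_{12,t}$, applies $H(p_{12,t})\ge \tfrac{t\pi_1}{\pi_{12,t}}H(p_1)+\tfrac{(1-t)\pi_{12}}{\pi_{12,t}}H(p_{12})$, multiplies through by $\pi_{12,t}$, does the symmetric step for the $23$ term, and adds. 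Your route via the joint concavity of $(a,b)\mapsto -a\log(a/b)$ is the same mechanism one level of abstraction up (the paper's step is exactly the perspective inequality written out by hand) and actually yields full concavity of $h$ on $[0,1]$ rather than just the single chord inequality stated in the lemma; the paper's version is more elementary and needs only concavity of $H$ as a black box.
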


\begin{proof}
We have that $g_{\pi,p}(t) = H(\langle \pi, p \rangle) - \left \langle \left ( \pi_{12,t}, \pi_{23,t} \right ), \left ( H(p_{12,t}), H(p_{23,t}) \right ) \right \rangle$.
Furthermore, if we write $\pi_{12} := \pi_1 + \pi_2$, $\pi_{23} := \pi_2 + \pi_3$, $p_{12} := \frac{\pi_1 p_1 + \pi_2 p_2}{\pi_{12}}$, and $p_{23} := \frac{\pi_2 p_2 + \pi_3 p_3}{\pi_{23}}$, then
\begin{eqnarray}
\pi_{12,t} & = & t \pi_1 + (1-t)\pi_{12}; \nonumber \\
\pi_{23,t} & = & t \pi_{23} + (1-t) \pi_3; \nonumber \\
p_{12,t} & = & \frac{t \pi_1 p_1 + (1-t) \pi_{12} p_{12}}{\pi_{12,t}}; \nonumber \\
p_{23,t} & = & \frac{t \pi_{23} p_{23} + (1-t) \pi_3 p_3}{\pi_{23,t}}. \nonumber
\end{eqnarray}

It is well known that $H$ is a concave functional: from this it follows that
\begin{eqnarray}
H(p_{12,t}) & \ge & \frac{t\pi_1}{\pi_{12,t}} H(p_1) + \frac{(1-t)\pi_{12}}{\pi_{12,t}} H(p_{12}); \nonumber \\
H(p_{23,t}) & \ge & \frac{t \pi_{23}}{\pi_{23,t}} H(p_{23}) + \frac{(1-t) \pi_3}{\pi_{23,t}} H(p_3). \nonumber
\end{eqnarray}
Therefore
\begin{eqnarray}
g_{\pi,p}(t) & = & H(\langle \pi, p \rangle) \nonumber \\
& & - \left \langle \left ( \pi_{12,t}, \pi_{23,t} \right ), \left ( H(p_{12,t}), H(p_{23,t}) \right ) \right \rangle \nonumber \\
& \le & H(\langle \pi, p \rangle) - t\pi_1 H(p_1) - (1-t)\pi_{12} H(p_{12}) \nonumber \\
& & - t \pi_{23} H(p_{23}) - (1-t) \pi_3 H(p_3) \nonumber \\
& = & t H(\langle \pi, p \rangle) - t \langle (\pi_1,\pi_{23}), (H(p_1), H(p_{23})) \rangle \nonumber \\
& & + (1-t) \cdot H(\langle \pi, p \rangle) \nonumber \\
& & - (1-t) \cdot \langle (\pi_{12},\pi_3), (H(p_{12}), H(p_3)) \rangle \nonumber \\ 
& = & t \cdot g_{\pi,p}(1) + (1-t) \cdot g_{\pi,p}(0) \nonumber
\end{eqnarray}
as claimed.
\end{proof}

\section{\label{sec:PreservingUnimodality}Preserving Unimodality}

Suppose that $(\pi, p)$ is a unimodal mixture on $\mathbb{R}$ with $|(\pi, p)| > 1$. We would like to determine how we can perturb two components of this mixture so that the result is still unimodal and yields the same density. In the event that the mixture is piecewise affine and continuous (or piecewise constant) the space of permissible perturbations can be characterized by the following 

\begin{lemma} 
\label{lem:unimodality}
For $0 \le k \le N$, let $y_k \in [0, \infty)$ be such that $y_0 = 0 = y_N$ and there are integers $\ell, u$ satisfying $0 < \ell \le u < N$ with
\begin{equation}
\label{eq:nnUnimodal}
y_0 \le \dots \le y_{\ell-1} < y_\ell = \dots = y_u > y_{u+1} \ge \dots \ge y_N.
\end{equation} 
(That is, $y_1,\dots,y_{N-1}$ is a nonnegative, nontrivial unimodal sequence.) Then for $1 \le r \le N-1$ and $\varepsilon^-_r \ge 0$, $y_k - \delta_{kr} \varepsilon^-_r$ is nonnegative and unimodal iff
\begin{equation}
\label{eq:give}
\varepsilon^-_r \le y_r - \min \{y_{r-1}, y_{r+1}\}.
\end{equation}
Similarly, for $\varepsilon^+_r \ge 0$, $y_k + \delta_{kr} \varepsilon^+_r$ is nonnegative and unimodal iff
\begin{equation}
\label{eq:take}
\varepsilon^+_r \le \begin{cases} \infty & \text{if } \ell-1 \le k \le u+1 \\ \max \{y_{r-1}, y_{r+1}\} -y_r & \text{otherwise.} \end{cases}
\end{equation}
\end{lemma}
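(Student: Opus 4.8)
The plan is to prove both biconditionals by a straightforward case analysis on where the perturbed index $r$ sits relative to the ``plateau'' $[\ell, u]$ of the unimodal sequence, treating the subtractive case \eqref{eq:give} and the additive case \eqref{eq:take} separately. In each case, unimodality of a nonnegative sequence is equivalent to the nonexistence of a ``dip'' — i.e., no index $k$ with $y_{k-1} > y_k < y_{k+1}$ — together with nonnegativity; since we only move one coordinate $y_r$, the only inequalities in the chain \eqref{eq:nnUnimodal} that can be disturbed are those involving $y_r$, namely the comparisons of $y_r$ with $y_{r-1}$ and with $y_{r+1}$ (and, at the endpoints, nonnegativity $y_r \ge 0$). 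So the whole argument reduces to bookkeeping on at most three local order relations.

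For the subtractive perturbation, write $y'_k := y_k - \delta_{kr}\varepsilon^-_r$. Nonnegativity requires $\varepsilon^-_r \le y_r$, but this is implied by the right-hand side of \eqref{eq:give} unless $r$ is a plateau index with $\min\{y_{r-1},y_{r+1}\}=0$, a boundary subtlety I would dispatch using $0 < \ell \le u < N$ and $y_0 = y_N = 0$. The substantive point is that decreasing $y_r$ never creates a violation of the monotone chains \emph{away} from a dip at $r$ itself: on the increasing part it can only make $y'_{r-1} \le y'_r$ fail, and on the decreasing part only $y'_r \ge y'_{r+1}$; in all positions the combined requirement is exactly $y'_r \ge \min\{y_{r-1}, y_{r+1}\}$ — on the ascending side the binding neighbor is $y_{r-1}$ and $y_{r+1} \ge y_{r-1}$ so the min is $y_{r-1}$; on the descending side symmetrically; and on the plateau both neighbors are $\le y_r = y_\ell=\dots=y_u$, with the min being whichever of $y_{\ell-1}, y_{u+1}$ is relevant (or either, if $r$ is strictly interior to the plateau, where the min is $y_r$ itself and \eqref{eq:give} forces $\varepsilon^-_r = 0$). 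Rearranging $y'_r \ge \min\{y_{r-1},y_{r+1}\}$ gives \eqref{eq:give}.

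For the additive perturbation, write $y'_k := y_k + \delta_{kr}\varepsilon^+_r$; nonnegativity is automatic. Increasing $y_r$ can only destroy unimodality by creating a value that is too large relative to its neighbors, i.e. a local max at $r$ competing with the existing plateau, or — if $r$ is adjacent to the plateau — simply extending the plateau, which remains unimodal. The key observation is the dichotomy recorded in \eqref{eq:take}: if $\ell - 1 \le r \le u+1$, then $y_r$ sits on or immediately beside the plateau, and raising it arbitrarily just grows a (still unimodal, possibly wider) plateau, so no bound is needed; otherwise $r$ is strictly on one monotone flank, and to keep that flank monotone after the bump we need $y'_r \le \max\{y_{r-1}, y_{r+1}\}$ (the max being the neighbor on the ``uphill'' side toward the plateau), which rearranges to \eqref{eq:take}. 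I would also note the consistency check that when $r$ is on a flank, $\max\{y_{r-1},y_{r+1}\} \ge y_r$, so the bound is nonnegative and $\varepsilon^+_r = 0$ is always admissible.

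I expect the main obstacle to be purely organizational rather than mathematical: carefully enumerating the positional cases for $r$ — strictly left of $\ell - 1$, equal to $\ell-1$, inside $[\ell,u]$, equal to $u+1$, strictly right — and, within each, the sub-cases where neighbors coincide (flat stretches on the flanks, since \eqref{eq:nnUnimodal} allows equalities with $\le$ and $\ge$) or where $r$ abuts an endpoint $y_0$ or $y_N$. The risk is an off-by-one slip in identifying which neighbor realizes the $\min$ or $\max$ at the plateau boundary; I would guard against this by stating once and for all that unimodality $\iff$ ``no strict dip,'' reducing every case to a single inequality of the form $y'_r \gtrless (\text{relevant neighbor})$, and then letting $\min$/$\max$ absorb the ambiguity about which neighbor is binding.
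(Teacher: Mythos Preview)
Your approach is essentially the paper's: a case split on the position of $r$ relative to the plateau $[\ell,u]$, with the remaining cases handled by the obvious left/right symmetry. The paper's sketch enumerates the same cases ($1 \le r < \ell-1$, $r=\ell-1$, $r=\ell$, $\ell<r<u$, then symmetry) and makes the same key observation for \eqref{eq:take}, namely that indices adjacent to or on the plateau can be raised without bound.

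There is one genuine slip to fix before you write this up. Your organizing principle ``unimodality $\iff$ no strict dip (no $k$ with $y_{k-1}>y_k<y_{k+1}$)'' is false, and it fails precisely in the situation you flagged as risky: flat stretches on a flank. Take $N=5$ and the unimodal sequence $(0,1,1,1,2,0)$, so $\ell=u=4$. With $r=1$ we are in the ``otherwise'' branch of \eqref{eq:take}, and the bound is $\max\{y_0,y_2\}-y_1=0$. Increasing $y_1$ by $2$ gives $(0,3,1,1,2,0)$, which has \emph{no} strict dip (at $k=2$ we have $3>1$ but $1\not<1$; at $k=3$ we have $1\not>1$), yet is plainly not unimodal. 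So if you actually reduce every case to ``no strict dip at some index,'' the $(\Leftarrow)$ direction of \eqref{eq:take} breaks on such inputs, which the hypotheses of the lemma allow.

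Your case-by-case reasoning (``keep that flank monotone,'' ``raising a plateau-adjacent value just widens the plateau'') is correct and does not rely on the faulty slogan, so you should lean on that directly, as the paper does. If you want a clean global criterion, the correct one is the \emph{extended} valley condition: the sequence is unimodal iff there are no indices $i<j<k$ with $y_i>y_j$ and $y_j<y_k$. That version \emph{is} equivalent to unimodality and, for a single-coordinate perturbation of a unimodal sequence, still localizes the check to comparisons of $y'_r$ with its neighbors together with the fixed strict inequalities $y_{\ell-1}<y_\ell$ and $y_u>y_{u+1}$.
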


\begin{proof}[Proof (sketch).]
We first sketch (\eqref{eq:give},$\Leftarrow$). Nonegativity follows from $0 \le \min \{y_{r-1}, y_{r+1} \} \le y_r - \varepsilon^-_r$. Unimodality follows from a series of trivial checks for the cases $1 \le r < \ell-1$, $r = \ell-1$, $r = \ell$, and $\ell < r < u$: the remaining cases $r = u$, $r = u+1$, and $u+1 < r \le N-1$ follow from symmetry. For example, in the case $1 \le r < \ell-1$, we only need to show that $y_{r-1} \le y_r - \varepsilon^-_r \le y_{r+1}$. 

A sketch of (\eqref{eq:give},$\Rightarrow$) amounts to using the same cases and symmetry argument to perform equally trivial checks. For example, in the case $1 \le r < \ell-1$, we have $\varepsilon^-_r \le y_r - y_{r-1} \le y_r - \min \{ y_{r-1}, y_{r+1}\}$.

The proof of \eqref{eq:take} is mostly similar to that of \eqref{eq:give}: the key difference here is that any argument adjacent to or at a point where the maximum is attained can have its value increased arbitrarily without affecting unimodality (or nonnegativity).
\end{proof}

The example in Figure \ref{fig:unimodalGiveTake} is probably more illuminating than filling in the details of the proof sketch above.

\section*{Acknowledgements}

The author thanks Adelchi Azzalini, Robert Ghrist, Subhadeep Mukhopadhyay, and John Nolan for their helpful and patient discussions, and reviewers for their comments and advice. 

This material is based upon work supported by the Defense Advanced Research Projects Agency (DARPA) and the Air Force Research Laboratory (AFRL). Any opinions, findings and conclusions or recommendations expressed in this material are those of the author(s) and do not necessarily reflect the views of DARPA or AFRL.

\bibliography{tmeICMLbib}
\bibliographystyle{icml2018}

\end{document}